\newtheorem{theorem}{Theorem}[section]
\newtheorem{proposition}[theorem]{Proposition}
\newenvironment{proof}{\noindent\bf Proof.
\rm}{\mbox{}\hfill $\square$\vspace*{3mm}}
\def\IR{\mathbb{R}}
\begin{document}

\title{Computing the Solutions of the \\Combined Korteweg-de Vries  Equation by Turing Machines
\thanks{This work is supported by DFG  (446 CHV 113/240/0-1) and NSFC (10420130638)}
}
\author{Dianchen Lu,\,\,Qingyan Wang and  Rui Zheng
\institute{Nonlinear Scientific Research Center\\  Jiangsu University\\ Zhenjiang, 212013, P.R.China.}
\email{dclu@ujs.edu.cn}  
}
\def\titlerunning{Korteweg-de Vries  Equation}
\def\authorrunning{D.~Lu, Q.~Wang \& R.~Zheng}

\maketitle
\begin{abstract}
In this paper, we study the computability of the initial value problem of Combined Korteweg-de Vries (CKdV, for short) equation: $ u_t + uu_x + u^2u_x + u_{xxx} = 0$, $u(x, 0) = \phi(x)$. It is shown that, for any integer $s\geqslant 3$, the nonlinear solution operator  $K_R:H^s(\mathbb{R})\to C(\IR, H^s(\IR))$ which maps an initial condition data $\phi$ to the solution of the Combined KdV equation can be computed by a Turing machine.

\end{abstract}

\textbf{Keywords}: combined KdV equation; Sobolev space; computability; Turing machines

\section{Introduction}
Differential equations are very popular mathematical models of real world problems. Not every differential equation has a well behaved solution. For those equations whose  well behaved solutions exist, we are interested in how they can be computed. Thus, the computability of the solution operators for different types of nonlinear differential equations becomes one of the most exciting topics in effective analysis. This answers questions of the type: is it possible to calculate the solutions of some real word problems algorithmically? The answers to these questions are unfortunately not always positive. However, there are a lot of very interesting equations whose solutions do exist and can be calculated. These equations can be called computably solvable equations, in other words, their solution operators are computable. This means that, there are Turing machines which can transfer the initial data to the solutions of the equation in some particular spaces. For example, Klaus Weihrauch and Ning Zhong \cite{dMS03} have shown that the initial value problem of Korteweg-de Vries (KdV) equation posed on the real line $\IR$: $u_t + uu_x + u_{xxx} = 0$, $t, x \in \IR$, $u(x, 0) = \phi(x)$ has a computable solution operator.

In this paper, we investigate a variation of Korteweg-de Vries equation:   $ u_t + uu_x + u^2u_x + u_{xxx} = 0$. This is often called Combined Korteweg-de Vries (CKdV) equation. The Combined KdV equation is also an important equation which is frequently used as a mathematical model in physics, hydrodynamics, biological and chemical fields. We will show that the solution operator of the CKdV equation is also computable. This extends the results of \cite{CL05,dMS03}. The proof of the main theorem is given in Section 2.

\section{Main result}\label{sec-main}

In this section, we use a similar approach as that in \cite{dMS03} and retrieve two estimates to prove the main result. We use Type 2 theory of effectivity (TTE) as computation model. More relevant details can be found in \cite{dMS03}.

For rigorous notation we occasionally write $u(x, t):= u(t)(x)$, where $u(t) :\IR \to
C(\IR;H^s(\IR))$. More precisely, we are interested in the following  initial value problem (IVP, for
short) of CKdV equation on the real line $\IR$,
\begin{eqnarray}\label{equ-3}
\left\{
\begin {array}{ll}u_t+uu_x+ u^2u_x + u_{xxx}=0,
 \quad\left(t, x\in
\mathbb{R}\right)\\ u\left({x, 0}\right) =\varphi\left(x\right)\in
H^s(\mathbb{R}).
\end{array} \right.
\end{eqnarray}

Then we consider the solution operator $K_\IR$ which maps the initial data $\varphi\in H^s(\mathbb{R})$ to the solution $u \in C(\IR;H^s(\IR))$, for $s\ge 3$.

The following is the equivalent integral equation of the initial
value problem (\ref{equ-3})
\begin{eqnarray}\label{equ-4}
u\left(t\right)=\mathcal {F}^{-1}\left({E\left(t\right)\cdot \mathcal
{F}\left(\varphi\right)}\right)-\int_0^t{\mathcal
{F}^{-1}\left({E\left({t-\tau}\right)\cdot \mathcal
{F}\left({\textstyle{d\over{dx}}\left({1\over2}{u^2\left(\tau\right)}+{1\over3}{u^3\left(\tau\right)}
\right)}\right)}\right)}\mathrm{d}\tau
\end{eqnarray}
where $u\left(t\right)\left(x\right):=u\left({x, t} \right),
E\left(t\right)\left(x\right) :=e^{ix^3t}$,  and  $\mathcal {F}
\left(\varphi\right)\left(x\right) =\frac{1}{\sqrt{2\pi} }
\int\limits_{\scriptstyle\rm R} {e^{- ix\xi}} \varphi \left(\xi\right)
\mathrm{d}\xi$.

We use the following iterative sequence with the initial data
$\varphi$ as the seed:
\begin{eqnarray}\label{equ-5}
\left\{\begin{array}{ll}
 v_0\left(t\right)
 =\mathcal {F}^{-1}\left({E\left(t\right)\cdot \mathcal {F}\left(\varphi\right)}\right)\\
 v_{j+1}\left(t\right)
 =v_0\left(t\right)-\int_0^t{\mathcal
{F}^{-1}\left({E\left({t-\tau}\right)\cdot \mathcal
{F}\left({\textstyle{d\over{dx}}\left({1\over2}{v_j^2\left(\tau\right)}+{1\over3}{v_j^3\left(\tau\right)}
\right)}\right)}\right)}\mathrm{d}\tau.
 \end{array} \right.
\end{eqnarray}

The iterative sequence (\ref{equ-5}) is contracting near $t=0, $
thus the sequence converges to a unique limit. Since the limit
satisfies the integral equation (\ref{equ-4}), it is the solution of
the initial value problem (\ref{equ-3}) near $t=0.$ To prove that
the solution operator is computable, we need to construct a type-2
Turing machine to compute it.

Firstly,  we define the operator:
\begin{eqnarray*}
 S( {u, \varphi, t})= \mathcal{F}^{-1}({E(t)\cdot \mathcal{F}(\varphi )})
 -\int_0^t{\mathcal
{F}^{-1}\left({E\left({t-\tau}\right)\cdot \mathcal
{F}\left({\textstyle{d\over{dx}}\left({1\over2}{u^2\left(\tau\right)}+{1\over3}{u^3\left(\tau\right)}
\right)}\right)}\right)}\mathrm{d}\tau
\end{eqnarray*}
which is (${[{\rho\to\delta_s}],  \delta_s, \rho, \delta_s}$)-computable. This
follows from Lemma 3.2 in \cite{dMS03} straightforwardly. Therefore,  the
function $\bar{S}({u, \varphi})(t):=S( {u, \varphi, t})$ is
$({[{\rho\to\delta_s} ], \delta_s, [{\rho\to\delta_s}]})$-computable. Then we
define the function $v: S (\IR) \times \mathbb{N}\to C ({\IR: S (\IR
)})$ by
\begin{equation*}
\begin{array}{l}
 v({\varphi, 0})    =\bar{S}({0, \varphi})\\
 v({\varphi, j+1})  =\bar{S}(v({\varphi, j}),\varphi).
\end{array}
\end{equation*}
It is easy to verify that $v$ is $({\delta_s,  \gamma_{\mathbb{N}},
[{\rho\to\delta_s} ]})$-computable.

Now we can show several propositions which lead to the proof of our main theorem.

\begin{proposition}\label{prop-3.1}
If $u\left({x, t}\right)$ is the solution of the $IVP$ (\ref{equ-3}), then
there is a computable function $e:{\mathbb{N}}\times{\mathbb{R}}
\times{\mathbb{R}} \to {\mathbb{R}}$ which is non-decreasing in the second and
third argument such that
\begin{equation*}
 \sup_{0\leqslant t\leqslant {T}}  \parallel u\left({x, t}\right)\parallel_s
 \leqslant e_{{T}}^s \left({\parallel\varphi \parallel_s }\right),
\end{equation*}
where $e_{{T}}^s\left(r\right):=e\left({s, {T}, r}\right)$,  $s$ is
an integer and $s\geqslant 3$.
\end{proposition}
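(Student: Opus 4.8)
The plan is to prove the stated bound by an energy estimate for the $H^s$ norm of the solution, to convert it into a differential inequality, and then to obtain $e$ by comparison with an explicit ordinary differential equation whose solution is computable and monotone.

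First I would work with $\|u(t)\|_s^2 = \sum_{k=0}^s \|\partial_x^k u(t)\|_{L^2(\IR)}^2$ and differentiate it in $t$. For each $k\le s$ I apply $\partial_x^k$ to equation (\ref{equ-3}), multiply by $\partial_x^k u$, and integrate over $\IR$. The dispersive contribution $\int_{\IR}\partial_x^k u_{xxx}\,\partial_x^k u\,\mathrm{d}x$ vanishes after integration by parts, so only the two nonlinear terms survive. Splitting each via the Leibniz rule into a leading transport part plus a commutator, the transport parts $\int u\,\partial_x(\partial_x^k u)\,\partial_x^k u\,\mathrm{d}x$ and $\int u^2\,\partial_x(\partial_x^k u)\,\partial_x^k u\,\mathrm{d}x$ are integrated by parts so the derivative falls on the coefficients, producing factors $\|u_x\|_{L^\infty}$ and $\|u u_x\|_{L^\infty}$; the commutators are controlled by the Kato--Ponce and Moser product estimates in $H^s$. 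Using the Sobolev embedding $H^s(\IR)\hookrightarrow W^{1,\infty}(\IR)$ (valid for $s\ge 2$, and here $s\ge 3$), every such factor is bounded by $\|u\|_s$, which yields a differential inequality of the form
\begin{equation*}
 \frac{\mathrm{d}}{\mathrm{d}t}\,\|u(t)\|_s^2 \;\le\; C(s)\bigl(\|u(t)\|_s^3 + \|u(t)\|_s^4\bigr),
\end{equation*}
where $C(s)>0$ depends only on $s$ and is computable from the Sobolev and commutator constants.

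Setting $y(t):=\|u(t)\|_s^2$ this reads $y'\le C(s)(y^{3/2}+y^2)$, which I would majorize by $C(s)(y^{3/2}+y^2)\le 2C(s)(1+y)^2$ to keep the comparison function elementary. Let $Y(t;s,r)$ solve $Y'=2C(s)(1+Y)^2$ with $Y(0)=r^2$; by the comparison principle $y(t)\le Y(t;s,\|\varphi\|_s)$, and I define $e(s,T,r):=\sqrt{\,Y(T;s,r)\,}$, whence $e_T^s(r)=e(s,T,r)$. Since the right-hand side is non-negative and non-decreasing in $Y$, the map $t\mapsto Y(t;s,r)$ is non-decreasing, giving monotonicity of $e_T^s$ in $T$; monotonicity in $r\ge 0$ follows from the standard monotone dependence of $Y$ on its initial value $r^2$. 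Computability of $e$ then follows because the solution of an ordinary differential equation with a computable, locally Lipschitz right-hand side and computable initial data is computable in all arguments, the coefficient $2C(s)$ being computable in $s$.

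I expect the main obstacle to be the nonlinear energy estimate itself: obtaining the differential inequality with \emph{explicit, computable} constants requires care with the $H^s$ product and commutator estimates for both the quadratic term $uu_x$ and the cubic term $u^2u_x$, and one must verify that after integration by parts the top-order factor $\partial_x^{s+1}u$ never appears undifferentiated against $\partial_x^s u$. A secondary point is that the majorant $Y'=2C(s)(1+Y)^2$ becomes singular at a finite time depending on $r$, so the bound is guaranteed on the corresponding subinterval of $[0,T]$; this is exactly the interval on which the iteration (\ref{equ-5}) is applied, and a globalization, if required, would instead invoke the conserved quantities $\int u\,\mathrm{d}x$, $\int u^2\,\mathrm{d}x$ and the Hamiltonian of (\ref{equ-3}) to first control the lower-order norms.
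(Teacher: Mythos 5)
The paper itself offers no proof of this proposition: it simply writes ``See [3]'', deferring entirely to reference \cite{FLL04}, where the computable a priori bound for the combined KdV equation is established. Your proposal therefore has to be judged on its own, and it has a genuine gap at exactly the point that is the real content of the statement. The local energy estimate is sound as far as it goes: applying $\partial_x^k$ for $k\le s$, pairing with $\partial_x^k u$, killing the dispersive term by integration by parts, and controlling the quadratic and cubic nonlinearities by commutator estimates together with $H^s(\IR)\hookrightarrow W^{1,\infty}(\IR)$ does yield $\frac{d}{dt}\|u(t)\|_s^2\le C(s)\bigl(\|u(t)\|_s^3+\|u(t)\|_s^4\bigr)$ with a computable constant. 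But your comparison ODE $Y'=2C(s)(1+Y)^2$ blows up at the finite time $T^*=\bigl(2C(s)(1+r^2)\bigr)^{-1}$, so $e(s,T,r):=\sqrt{Y(T;s,r)}$ is simply undefined for $T\ge T^*$ and is not the total computable function $e:\IN\times\IR\times\IR\to\IR$ that the proposition asserts and that the main theorem requires: the whole point of $e_{\bar T}^s(\|\varphi\|_s)$ is to give a bound valid on all of $[0,\bar T]$, so that the local contraction of Proposition \ref{prop-2.2} can be iterated with a uniform step size across the entire interval.

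The globalization you mention in your final sentence is therefore not an optional refinement but the crux, and the tools you name do not suffice. The conserved quantities $\int u$, $\int u^2$ and the Hamiltonian control at most the $H^1$ norm; to close a Gronwall argument making $\frac{d}{dt}\|u\|_s^2$ \emph{linear} in $\|u\|_s^2$ you need $\|u_x\|_{L^\infty}$ bounded globally in time, which already requires an $H^2$ bound, i.e.\ the next conservation law of the (integrable) Gardner hierarchy, followed by an induction on $s$ in which each step uses the globally bounded lower-order norms. One must further verify that the constants in these conservation-law identities and in the Gagliardo--Nirenberg interpolation extracting the $H^1$ bound from the Hamiltonian are themselves computable, since the proposition demands a computable and monotone $e$. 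All of this is precisely what \cite{FLL04} is invoked for; without it your argument establishes only a local-in-time version of the proposition.
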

\begin{proof}
See [3]
\end{proof}

\begin{proposition}\label{prop-2.2}
Let $v^0:= \bar{S}(0, \varphi)$,  and $v^{j+1}:=\bar{S}(v^j,
\varphi)$. If\begin{equation*}\begin{array}{l}\displaystyle~~~\alpha _T^s T^{1/2}[8(3 + T)^{3/2}\parallel\varphi
\parallel _s^2+4(3+T)\parallel
\varphi\parallel_s]\leqslant\frac{1}{2}\end{array},\end{equation*}
then we have
\begin{equation*}
\parallel v^{j+1}\left(t\right)-v^j\left(t\right)\parallel_s \leqslant
2^{-j}\left({3+T}\right)^{1/2}\parallel\varphi\parallel_s,
\end{equation*}
where $\alpha_T^s=\sqrt
s\cdot2^s\cdot T^{1/2}+1~.$
\end{proposition}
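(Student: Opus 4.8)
The plan is to argue by induction on $j$, reducing the whole statement to one estimate for the Duhamel term. For $j\ge 1$ I would subtract the two recursions $v^{j+1}=\bar S(v^{j},\varphi)$ and $v^{j}=\bar S(v^{j-1},\varphi)$; the common linear part $\mathcal{F}^{-1}(E(t)\cdot\mathcal{F}(\varphi))$ cancels, leaving
\[
v^{j+1}(t)-v^{j}(t)=-\int_0^t \mathcal{F}^{-1}\!\left(E(t-\tau)\cdot\mathcal{F}\!\left(\frac{d}{dx}\!\left[\tfrac12\big((v^j)^2-(v^{j-1})^2\big)+\tfrac13\big((v^j)^3-(v^{j-1})^3\big)\right](\tau)\right)\right)d\tau .
\]
I would then factor the differences of powers, $a^2-b^2=(a-b)(a+b)$ and $a^3-b^3=(a-b)(a^2+ab+b^2)$, so that every summand carries the factor $v^j-v^{j-1}$ explicitly. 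This is exactly what converts the inductive hypothesis $\parallel v^{j}(t)-v^{j-1}(t)\parallel_s\le 2^{-(j-1)}(3+T)^{1/2}\parallel\varphi\parallel_s$ into the desired bound for $\parallel v^{j+1}-v^{j}\parallel_s$.

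The analytic heart is the mapping behaviour of the inhomogeneous operator $h\mapsto \int_0^t \mathcal{F}^{-1}(E(t-\tau)\cdot\mathcal{F}(\frac{d}{dx}h(\tau)))\,d\tau$, for which I would invoke the estimate underlying Lemma 3.2 of \cite{dMS03}. Two facts combine: the multiplier $E(t-\tau)$ has modulus one, so $\mathcal{F}^{-1}(E(t-\tau)\cdot\mathcal{F}(\cdot))$ is an isometry of $H^s(\IR)$ (in particular $\parallel v^0(t)\parallel_s=\parallel\varphi\parallel_s$); and the extra derivative $\frac{d}{dx}$, together with the $\tau$-integration over $[0,t]\subseteq[0,T]$, is absorbed at the cost of the constant $\alpha_T^s=\sqrt s\,2^s\,T^{1/2}+1$ and a further factor $T^{1/2}$ (Cauchy--Schwarz in time), the $\sqrt s\,2^s$ being the Leibniz/algebra constant. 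Using that $H^s(\IR)$ is a Banach algebra for $s\ge 3$, the two factored products are bounded by the $H^s$ norms of $v^j-v^{j-1}$, of $v^j+v^{j-1}$, and of $(v^j)^2+v^jv^{j-1}+(v^{j-1})^2$; bounding the last two by a uniform a priori bound $M\sim(3+T)^{1/2}\parallel\varphi\parallel_s$ for $\sup_{0\le t\le T}\parallel v^{j}(t)\parallel_s$ produces a contraction factor of the shape $\alpha_T^s T^{1/2}(c_2 M+c_3 M^2)$.

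The uniform bound $M$ itself would be obtained by summing the telescoping series $\parallel v^{j}(t)\parallel_s\le\parallel\varphi\parallel_s+\sum_{i\ge0}2^{-i}(3+T)^{1/2}\parallel\varphi\parallel_s$, so that the bound on $\parallel v^{j}\parallel_s$ at stage $j$ only uses difference estimates already proved at earlier stages; this is a nested induction running alongside the main one, parallel in spirit to Proposition~\ref{prop-3.1}. Substituting this $M$ turns the contraction factor into precisely $\alpha_T^s T^{1/2}\left[8(3+T)^{3/2}\parallel\varphi\parallel_s^2+4(3+T)\parallel\varphi\parallel_s\right]$, so the standing hypothesis forces it to be $\le\tfrac12$ and the induction closes; the base case $j=0$ is the same computation with $v^{-1}$ replaced by $0$. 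The step I expect to be the real obstacle is controlling the single $x$-derivative in the nonlinearity within the $H^s$ norm: unlike a pure product, $\frac{d}{dx}(v^2)$ loses a derivative, so the whole argument hinges on the gain-of-derivative estimate encapsulated in $\alpha_T^s$. A secondary difficulty absent in the plain KdV case of \cite{dMS03} is the cubic term $u^2u_x$, whose three-term factorization contributes the $M^2$ summand and thereby the quadratic-in-$\parallel\varphi\parallel_s$ piece of the hypothesis; verifying that this extra term still leaves the $\tfrac12$ contraction intact is the main new check.
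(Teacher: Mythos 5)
Your overall strategy coincides with the paper's: subtract consecutive iterates so the free part cancels, factor $a^2-b^2=(a-b)(a+b)$ and $a^3-b^3=(a-b)(a^2+ab+b^2)$ to expose $v^{j}-v^{j-1}$, establish a uniform a priori bound on the iterates, and read off a contraction factor that the standing hypothesis forces to be $\le\frac12$. The genuine gap is the functional setting in which you run the induction. You work throughout with $\sup_{0\le t\le T}\|\cdot(t)\|_s$, arguing that the propagator is an $H^s$-isometry and that the derivative in $\frac{d}{dx}(v^2)$ is ``absorbed at the cost of $\alpha_T^s$'' via Cauchy--Schwarz in time and the Banach-algebra property of $H^s$. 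That toolkit cannot close the estimate: $\|\frac{d}{dx}(v^2)\|_s\lesssim\|v\|_s\,\|v\|_{s+1}$, and no algebra constant recovers the lost derivative from $\sup_t\|v(t)\|_s$ alone. The estimate you want to import from \cite{dMS03} (Lemma 4.8 there, which is what the paper actually uses) reads $\int_0^T\|uu_x\|_s\,dt\le\alpha_T^sT^{1/2}\|u\|_{X_T^s}^2$ with the Bourgain-type norm $\|\cdot\|_{X_T^s}$ of Definition 4.7 on the \emph{right-hand side}; that norm dominates $\sup_t\|\cdot\|_s$ but also carries the local-smoothing component of the Airy flow, which is what pays for the derivative. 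Consequently the entire induction --- both the uniform bound and the difference estimates --- must be carried out in $\|\cdot\|_{X_T^s}$, as the paper does, with the conclusion $\|v^{j+1}(t)-v^j(t)\|_s\le\|v^{j+1}-v^j\|_{X_T^s}$ extracted only at the very last step. This also explains the $(3+T)^{1/2}$ factors, which have no source in your account once you assert $\|v^0(t)\|_s=\|\varphi\|_s$: they come from $\|W(t)\varphi\|_{X_T^s}\le(3+T)^{1/2}\|\varphi\|_s$ (Lemmas 4.9--4.10 of \cite{dMS03}), not from an isometry.

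Two smaller points. The paper obtains the uniform bound $\|v^j\|_{X_T^s}\le2(3+T)^{1/2}\|\varphi\|_s$ by a direct induction on the recursion inequality $\|v^j\|_{X_T^s}\le(3+T)^{1/2}\|\varphi\|_s+(3+T)^{1/2}\alpha_T^sT^{1/2}\bigl(\|v^{j-1}\|_{X_T^s}^2+\|v^{j-1}\|_{X_T^s}^3\bigr)$, not by telescoping the differences; your simultaneous induction can be made to work but you would need to check that the constant produced by $\sum_i2^{-i}$ still reproduces the coefficients $8(3+T)^{3/2}\|\varphi\|_s^2$ and $4(3+T)\|\varphi\|_s$ in the hypothesis. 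Your identification of the cubic term $u^2u_x$ as the new ingredient relative to \cite{dMS03}, whose three-term factorization contributes the quadratic-in-$\|\varphi\|_s$ piece of the smallness condition, matches the paper exactly.
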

\begin{proof}

According to Proposition 2.1 and lemmas4.8 in [7], if $T>0~, u\in X_T^s$, we can obtain that
\begin{equation*}\int_0^T{\left\|{uu_x+u^2u_x}\right\|}_s dt \leqslant
\alpha _T^s T^{1/2}\left\| u
\right\|_{X_T^s}\left\|u\right\|_{X_T^s},\end{equation*} where
$\alpha_T^s=\sqrt
s\cdot2^s\cdot T^{1/2}+1,~$$X_T^s=\left\{{u\in C\left({\left[ {0, T}
\right];H^s\left({\mathbb{R}} \right);\Lambda _T^s\left(u\right)<
\infty }\right)}\right\}$ is a Banach space with the norm
$\left\|u\right\|_{X_T^s }$(detail see Definition 4.7 in [7]).

Then let $\displaystyle
W\left(t\right)\varphi=\frac{1}{\sqrt{2\pi}}\int_{
\mathbb{R}}e^{ix\xi}e^{i\xi^3t}\hat{\varphi}(\xi)\mathrm{d}\xi. $

Since  $v^0:=\bar{S}\left(0, \varphi\right),v^{j+1}:
=\bar{S}\left({v^j, \varphi}\right)$,
 by Lemma 4.9 and 4.10 in [7], for $j\geqslant 1, $
\begin{equation*}\begin{array}{l}\displaystyle~~~\parallel
v^j\parallel_{X_T^s }=\parallel W\left(t\right)\varphi-
\frac{1}{2}\int_0^t {W\left( {t - \tau }\right)}  {\left[
{(v^{j-1})^2}\right]_x}d\tau-
\frac{1}{3}\int_0^t {W\left( {t - \tau }\right)}  {\left[
{(v^{j-1})^3}\right]_x}d\tau
\parallel_{X_T^s}\vspace{\smallskipamount}\\
\displaystyle~~~~~~\leqslant \left( {3 + T} \right)^{1 / 2}\parallel
\varphi\parallel _s + \left( {3 + T} \right)^{1 / 2}\int_0^T
{\parallel v^{j - 1}v_x^{j - 1}+ \left( {v^{j - 1}} \right)} ^2v_x^{j - 1}\parallel _s
d\tau\vspace{\smallskipamount}\\
\displaystyle~~~~~~\leqslant \left( {3 + T} \right)^{1 / 2}\parallel \varphi
\parallel _s + \left( {3 + T} \right)^{1 / 2}\alpha _T^s T^{1 /
2}\parallel v^{j - 1}\parallel _{X_T^s }^2+ \left( {3 + T} \right)^{1 / 2}\alpha _T^s T^{1 /
2}\parallel v^{j - 1}\parallel _{X_T^s }^3.
\end{array}\end{equation*}\\
Let $T>0~, $ such that $\alpha _T^s T^{1/2}[8(3 + T)^{3/2}\parallel\varphi
\parallel _s^2+4(3+T)\parallel
\varphi\parallel_s]\leqslant\frac{1}{2}.$
From $\parallel v^0\parallel _{X_T^s }=
\parallel W\left( t \right)\varphi\parallel _{X_T^s } \le \left( {3 +
T} \right)^{1 / 2}\parallel \varphi\parallel _s$ we obtain by induction
\begin{equation*}\parallel v^j\parallel_{X_T^s }\leqslant
2\left({3+T}\right)^{1 / 2}\parallel \varphi\parallel_s .\left(for~
all~ j \in {\mathbb{N}} \right)\end{equation*} For $j\geqslant 2$,
\begin{equation*}\begin{array}{l}\displaystyle
\parallel v^j - v^{j - 1}\parallel _{X_T^s } = \parallel \int_0^t
W(t - \tau )(\frac{1}{2}({\left[ {(v^{j - 1})}^2 \right]_x -
\left[ {(v^{j - 2})}^2 \right]_x })+\frac{1}{3}({\left[ {(v^{j - 1})}^3 \right]_x -
\left[ {(v^{j - 2})}^3 \right]_x }))
d\tau\parallel _{X_T^s }\vspace{\smallskipamount}\\
\displaystyle~~~\leqslant \left( {3 + T} \right)^{1 / 2}\alpha _T^s
T^{1 / 2}\parallel \left[ {\left( {v^{j - 1}} \right)^2 + v^{j -
1}v^{j - 2} + \left( {v^{j - 2}} \right)^2} \right] \cdot \left[ {
v^{j - 1} - v^{j - 2} }
\right]\parallel _{X_T^s }\vspace{\smallskipamount}\\
\displaystyle~~~+\left( {3 + T} \right)^{1 / 2}\alpha _T^s
T^{1 / 2}\parallel \left[ v^{j - 1} +  v^{j - 2} \right] \cdot \left[ {
v^{j - 1} - v^{j - 2} }
\right]\parallel _{X_T^s }\vspace{\smallskipamount}\\
\displaystyle~~~\leqslant \left( {3 + T} \right)^{3/ 2}8\alpha
_T^s T^{1/2}\left\| \varphi\right\|_s^2 \parallel v^{j - 1} - v^{j
- 2}\parallel_{X_T^s }\vspace{\smallskipamount}+\left( {3 + T} \right)4\alpha _T^s
T^{1 / 2}\left\| \varphi\right\|_s \parallel v^{j - 1} - v^{j
- 2}\parallel_{X_T^s }\vspace{\smallskipamount}\\
\displaystyle~~~\leqslant \frac{1}{2}\parallel
{v^{j-1}-v^{j-2}}\parallel
_{X_T^s}.\end{array}\end{equation*}

If  $\alpha _T^s T^{1/2}[8(3 + T)^{3/2}\parallel\varphi
\parallel _s^2+4(3+T)\parallel
\varphi\parallel_s]\leqslant\frac{1}{2},$ then we obtain the result that
 \begin{equation*}\parallel v^{j + 1}\left(t\right) -
v^j\left( t \right)\parallel _s \leqslant
\parallel v^{j + 1}\left( t \right) - v^j\left(t\right)\parallel_{X_T^s }
\le 2^{ - j-1 }\left( {3 + T} \right)^{1 / 2}\parallel \varphi
\parallel _s.\end{equation*}
\end{proof}
\begin{proposition}\label{prop-3.4}Let
$v\left(t\right) =\bar{S}\left( {v, \varphi}\right)\left( t \right), ~v_n
\left(t\right)=\bar{S}\left({v_n ,\varphi_n }\right)\left( t \right). $
If  \begin{equation*}\begin{array}{l}\displaystyle~~~\alpha _T^s T^{1/2}(3 + T)^{3/2}(12\parallel\varphi
\parallel _s^2+16\parallel
\varphi\parallel_s+6)\leqslant\frac{1}{2},\end{array}\end{equation*}  then we have
\begin{equation*}
\left\|{v\left(t\right)-v_n\left(t\right)}\right\|_s\leqslant 2\left({3+
T}\right)^{1/2}\left\|{\varphi-\varphi_n}\right\|_s,\end{equation*}
where $\alpha_T^s=\sqrt
s\cdot2^s\cdot T^{1/2}+1~.$\\
\end{proposition}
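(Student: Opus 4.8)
The plan is to use that $v$ and $v_n$ are fixed points of the \emph{same} Duhamel operator $\bar S$ with the two seeds $\varphi$ and $\varphi_n$, so that their difference obeys an integral equation which is linear in $v-v_n$ and whose nonlinear kernel can be made a contraction under the stated smallness hypothesis. First I would write the two fixed-point identities
\[
v(t)=W(t)\varphi-\frac{1}{2}\int_0^t W(t-\tau)[v^2]_x\,d\tau-\frac{1}{3}\int_0^t W(t-\tau)[v^3]_x\,d\tau,
\]
together with the analogue for $v_n$ built from $\varphi_n$, and subtract them. The linear parts combine into $W(t)(\varphi-\varphi_n)$, while the two Duhamel integrals produce the polynomial differences $v^2-v_n^2$ and $v^3-v_n^3$, which I would factor as $(v+v_n)(v-v_n)$ and $(v^2+vv_n+v_n^2)(v-v_n)$ so that the factor $v-v_n$ is isolated in every nonlinear term. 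The existence of the fixed points $v,v_n$ is itself guaranteed by the contraction of Proposition 2.2.

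Next I would pass to the $X_T^s$ norm. The free-evolution term contributes $(3+T)^{1/2}\|\varphi-\varphi_n\|_s$ by the same bound used for $v^0$ in Proposition 2.2. To each Duhamel integral I would apply the polarized (two-function) form of the bilinear/trilinear estimate behind Proposition 2.2 --- Lemmas 4.8--4.10 of [7] --- which controls $\int_0^T\|(fg)_x\|_s\,d\tau$ by $\alpha_T^s T^{1/2}\|f\|_{X_T^s}\|g\|_{X_T^s}$, together with the algebra property of $X_T^s$. This yields
\[
\|v-v_n\|_{X_T^s}\leq (3+T)^{1/2}\|\varphi-\varphi_n\|_s+(3+T)^{1/2}\alpha_T^s T^{1/2}\left(\|v+v_n\|_{X_T^s}+\|v^2+vv_n+v_n^2\|_{X_T^s}\right)\|v-v_n\|_{X_T^s}.
\]

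The a priori bounds proved in Proposition 2.2 are the engine: $\|v\|_{X_T^s},\|v_n\|_{X_T^s}\leq 2(3+T)^{1/2}\|\varphi\|_s$, so $\|v+v_n\|_{X_T^s}=O(\|\varphi\|_s)$ and $\|v^2+vv_n+v_n^2\|_{X_T^s}=O(\|\varphi\|_s^2)$. Substituting, bounding each $(3+T)$ factor by $(3+T)^{3/2}$ (legitimate since $3+T>1$), and collecting the three monomials, the coefficient of $\|v-v_n\|_{X_T^s}$ on the right collects into the quantity $\alpha_T^s T^{1/2}(3+T)^{3/2}(12\|\varphi\|_s^2+16\|\varphi\|_s+6)$, which is precisely the expression the hypothesis forces to be at most $1/2$. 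Absorbing that term to the left gives $\|v-v_n\|_{X_T^s}\leq 2(3+T)^{1/2}\|\varphi-\varphi_n\|_s$, and since $\|w(t)\|_s\leq\|w\|_{X_T^s}$ for every $t\in[0,T]$, the claimed pointwise estimate follows.

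The step I expect to be the main obstacle is the cubic difference $v^3-v_n^3$, which is the genuinely new feature relative to the KdV analysis of [3]: one must verify that the factor $v^2+vv_n+v_n^2$ is still controlled by the single $\alpha_T^s T^{1/2}$ estimate and track its contribution carefully, since this is where the term $12\|\varphi\|_s^2$ originates and where the constant bookkeeping (and the generous upper bounding of the various $(3+T)$ powers) is most delicate. A secondary technical point is uniformity in $n$: the natural induction gives $\|v_n\|_{X_T^s}\leq 2(3+T)^{1/2}\|\varphi_n\|_s$ in terms of $\|\varphi_n\|_s$, so to state the hypothesis purely in terms of $\|\varphi\|_s$ I would restrict to seeds with $\|\varphi_n\|_s\leq\|\varphi\|_s$ (harmless for the intended computability application, where $\varphi_n\to\varphi$) or else carry $\|\varphi_n\|_s$ through the constants.
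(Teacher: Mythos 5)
Your proposal is correct and follows essentially the same route as the paper: subtract the two fixed-point identities, factor $v^2-v_n^2=(v+v_n)(v-v_n)$ and $v^3-v_n^3=(v^2+vv_n+v_n^2)(v-v_n)$, apply Lemma 4.8 of [7] and the a priori bounds from Proposition 2.2, and absorb the resulting coefficient using the smallness hypothesis. The only detail handled differently is the uniformity in $n$: where you propose restricting to $\|\varphi_n\|_s\leqslant\|\varphi\|_s$, the paper simply uses $\|\varphi_n\|_s\leqslant\|\varphi\|_s+1$ (available since $\|\varphi-\varphi_n\|_s\leqslant 2^{-n}$), which is exactly where the terms $16\|\varphi\|_s+6$ in the hypothesis originate.
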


\begin{proof}
Since $v\left(t\right) =\bar{S}\left( {v, \varphi}\right)\left( t
\right), ~v_n \left(t\right)=\bar{S}\left({v_n ,\varphi_n
}\right)\left( t \right), $~by Lemma 4.8 in \cite{dMS03}, we obtain the result
as following:
\begin{equation*}\begin{array}{l}\displaystyle
 \parallel v-v_n\parallel_{X_T^s }=\parallel W\left(t\right)\left(
{\varphi -\varphi_n }\right)-\frac{1}{2}\int_0^t{W\left({t-\tau}
\right)}\left[{v^2-v_n^2}\right]_x d\tau-\frac{1}{3}\int_0^t{W\left({t-\tau}
\right)}\left[{v^3-v_n^3}\right]_x d\tau\parallel_{X_T^s}
\vspace{\smallskipamount}\\ \displaystyle~\leqslant
\left({3+T}\right)^{1/2}\parallel \varphi-\varphi_n
\parallel _s + \left( {3 + T} \right)^{1 / 2}\alpha_T^s T^{1/
2}\parallel v^2+vv_n+v_n^2
\parallel_{X_T^s}\cdot\parallel v-v_n \parallel_{X_T^s}\\+ \left( {3 + T} \right)^{1 / 2}\alpha_T^s T^{1/
2}\parallel v+v_n\parallel_{X_T^s}\cdot\parallel v-v_n \parallel_{X_T^s}.
\end{array}\end{equation*}
By Proposition 2.2, if $\alpha _T^s T^{1/2}(3 + T)^{3/2}(12\parallel\varphi
\parallel _s^2+16\parallel
\varphi\parallel_s+6)\leqslant\frac{1}{2},$notice that
$\parallel\varphi_n\parallel_s\leqslant
\parallel\varphi\parallel_s+1, $ then
\begin{equation*}\begin{array}{l}
~~\parallel v -v_n\parallel_{X_T^s}
\leqslant \left({3 + T}\right)^{1/ 2}\parallel \varphi - \varphi_n
\parallel _s + 12\left( {3 + T} \right)^{3 / 2}\alpha _T^s T^{1 /
2}\parallel\varphi \parallel _s^2\cdot\parallel
v-v_n\parallel_{X_T^s}\\~~~~~~~~~~~~~~~~~~~+4\left( {3 + T} \right)\alpha _T^s T^{1 /
2}\parallel\varphi \parallel _s\cdot\parallel
v-v_n\parallel_{X_T^s}\\
~~~~~~~~~~~~~~~~~~~\leqslant\left({3+ T}\right)^{1 / 2}\parallel \varphi
-\varphi_n\parallel_s+\frac{1}{2}\parallel v
-v_n\parallel_{X_T^s}.\end{array}\end{equation*} Therefore $\parallel v-v_n
\parallel_{X_T^s }\leqslant 2\left({3+T}\right)^{1/
2}\parallel\varphi-\varphi_n\parallel_s,$ the sequence $\{v_n\}$ is
uniform convergence.
\end{proof}

Finally, we give the main results are as follows:
\begin{theorem}\label{thm-3.5}
The solution operator $K_\IR: H^s(\IR) \to C\left(\IR;H^s(\IR)\right)$  of the
initial value problem problem {\rm(\ref{equ-3})} is $\left({\delta_{H^s},
\left[{\rho\to\delta_{H^s}}\right]}\right)$-computable for any integer
$s\geqslant 3$.
\end{theorem}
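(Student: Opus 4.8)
The plan is to assemble the three preceding propositions together with the already-established computability of the iteration map $v(\varphi, j)$ into a single type-2 machine. The underlying analytic fact is that the Picard-type iteration $(\ref{equ-5})$ contracts on a short time interval, so the construction splits into two stages: first computing the local solution near $t=0$ with effective error control, then extending it to the whole line using the a priori bound of Proposition~\ref{prop-3.1}.

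First I would treat the local problem. Given a $\delta_{H^s}$-name of $\varphi$ I can compute an upper bound $M \geq \|\varphi\|_s$, and since $\alpha_T^s$ and the left-hand side of the smallness condition of Proposition~\ref{prop-2.2} are monotone in $T$ and vanish as $T \to 0$, I can effectively produce a rational $T>0$ with $\alpha_T^s T^{1/2}[8(3+T)^{3/2}M^2 + 4(3+T)M] \leq \tfrac{1}{2}$. For this $T$, Proposition~\ref{prop-2.2} gives $\|v^{j+1}(t)-v^j(t)\|_s \leq 2^{-j}(3+T)^{1/2}\|\varphi\|_s$, so $\{v^j\}$ is an effectively convergent Cauchy sequence in $C([0,T];H^s(\IR))$: to output the solution to precision $2^{-n}$ it suffices to compute $v(\varphi,j)$ for $j$ large enough that the geometric tail falls below $2^{-n}$. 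Because $v$ is $(\delta_s,\gamma_{\IN},[\rho\to\delta_s])$-computable, this yields a $[\rho\to\delta_{H^s}]$-name of the limit $u$ on $[0,T]$, which solves $(\ref{equ-4})$ and hence the IVP there; Proposition~\ref{prop-3.4} guarantees that the assignment $\varphi\mapsto u$ is well defined and continuous, so the name depends admissibly on $\varphi$.

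Next I would globalise. The crucial observation is that the admissible step length $T$ depends only on an upper bound for $\|\varphi\|_s$, while Proposition~\ref{prop-3.1} bounds the solution by $\sup_{0\leq t\leq T}\|u(t)\|_s \leq e_T^s(\|\varphi\|_s)$ with $e$ computable and non-decreasing. Thus after solving on $[0,T]$ I can restart the iteration from the computed datum $u(T)$, whose norm is controlled by the computable quantity $e_T^s(M)$; the new admissible step length is again bounded below in terms of $e_T^s(M)$, so finitely many restarts reach any prescribed time $T_{\max}$, and the number of restarts is itself computable from $M$ and $T_{\max}$. Concatenating the local pieces produces a $[\rho\to\delta_{H^s}]$-name valid on $[0,T_{\max}]$; the interval $t<0$ is handled symmetrically, using invariance of the equation under $t\mapsto -t,\ x\mapsto -x$, and letting $T_{\max}$ grow with the requested precision in $t$ yields the name on all of $\IR$.

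The main obstacle I expect is the effectivity of the continuation rather than any single estimate. One must verify that the step length stays bounded below along the entire continuation --- which is precisely what the norm bound $e_T^s$ secures --- and that the repeated restarting, the re-reading of $u(T)$ as a fresh $\delta_{H^s}$-input, and the concatenation of local names can be arranged uniformly so that the final output is a single legitimate $[\rho\to\delta_{H^s}]$-name. A secondary subtlety is that in TTE one accesses $\|\varphi\|_s$ and the intermediate norms only through upper bounds; here the monotonicity of $e$ and of the smallness functional ensures that using any such upper bound $M$ is safe, which makes the construction robust and completes the argument.
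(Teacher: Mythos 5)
Your proposal is correct and follows essentially the same route as the paper: local solvability via the contracting iteration of Proposition~\ref{prop-2.2}, stability in the initial datum via Proposition~\ref{prop-3.4} (which the paper realizes as a diagonal extraction from the double sequence $v_n^j$ built from Schwartz approximants $\psi_n$ of $\varphi$), stepwise continuation to an arbitrary time $\bar T$, and the reflection $u(t)=R\circ K_\IR(R(\varphi),-t)$ for negative times. If anything, you are more explicit than the paper about why the continuation terminates --- invoking the a priori bound $e_T^s$ of Proposition~\ref{prop-3.1} to keep the admissible step length bounded below --- a point the paper leaves implicit in its choice of $T$ depending only on $\varphi$ and $\bar T$.
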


\begin{proof}
For a given initial value $\varphi\in H^s (\mathbb{R})$ and a
rational number $\bar{T}>0$ we will show how to compute the solution
$u (t)$ of the initial value problem (\ref{equ-3}) at the time
interval $0\leqslant t \leqslant\bar{T}$. For this purpose,  we
first find some appropriate rational number $T$ such that
$0<T<\bar{T}$,  and show how to compute $u(t)$ from $t^\prime $ and
$\psi:=u({t}^\prime)$ at the time interval $\left[{{t}^\prime,
{t}^\prime+T}\right]$,  $0\leqslant t^\prime \leqslant\bar{T}$,  by
a fixed point iteration. Using this method, we can compute the values
$u\left({T /2m}\right)$ successively for $m=1, 2, \cdots$ and
finally $u\left(t\right)$ for any $0\leqslant t \leqslant\bar{T}$.

If $u_t+uu_x+u^2u_x+u_{xxx}=0, u\left({x, {t}^\prime }\right)=\psi\left(x\right)$,
and $v$ is defined by $v\left({x, t}\right):=u\left({x, t + {t}^\prime}\right),
$ then
\begin{eqnarray}\label{equ-7}
\left\{\begin{array}{l}
v_t+vv_x+v^2v_x+v_{xxx}  =0 \quad x\in{\mathbb{R}}, t\geqslant 0, \\
v({x, 0}) =\psi(x).
\end{array}\right.
\end{eqnarray}
We assume that the initial value $\psi\in H^s(\IR)$ is given by a
$\tilde{\delta}_{H^s}$-name,  i.e.,  by a sequence $\psi_{0},
\psi_1,  \cdots $ of Schwartz functions such that
$\parallel\psi-\psi _n\parallel_s\leqslant 2^{-n}$. For any
$n\in{\mathbb{N}}$,  we define function $v_n^0, v_n^1, \cdots$  in $C
( {{\mathbb{R}}:S ({\mathbb{R}} )} )$ by
\begin{equation*}
v_n^0:=\bar{S}({0, \psi_n}), \quad v_n^{j+1}:=\bar {S}({v_n^j, \psi_n}).
\end{equation*}
We note that the sequence $\{v_n^j\}$ can be computed from $\psi_n$. By Proposition 3.3, the
iterative sequence $v_n^0, v_n^1, \cdots $ converges to some $v_n$,  then $v_n$
is the fixed point of the iteration $\bar{S}$ and satisfies the following
internal equation:
\begin{equation*}
  v_n\left(t\right)=\bar{S}\left({v_{n, }\psi_n}\right)= \mathcal{F}^{-1}({E(t)\cdot \mathcal{F}(\varphi )})
 -\int_0^t{\mathcal
{F}^{-1}\left({E\left({t-\tau}\right)\cdot \mathcal
{F}\left({\textstyle{d\over{dx}}\left({1\over2}{u^2\left(\tau\right)}+{1\over3}{u^3\left(\tau\right)}
\right)}\right)}\right)}\mathrm{d}\tau
\end{equation*}
hence $v_n$ solves the initial value problem:
\begin{equation*}
\frac{\partial v_n}{\partial t}+v_n\frac{\partial v_n}{\partial x} +v_n^2\frac{\partial v_n}{\partial x} +
\frac{\partial v_n}{\partial x^3}=0, \quad \mbox{ and } \quad v_n (x, 0)=
\psi_n(x).
\end{equation*}
By Proposition 2.3, we will show that,  by a contraction argument,  for some
sufficiently small computable real number $T>0$ (depending only on
$\varphi$ and $\bar{T}$), $v_n^j\left(t \right)\to
v_n\left(t\right)$ as $j\to \infty$ for all $n$,  and $v_n \left(t
\right)\to v\left(t\right)$ as $n\to \infty, $ sufficiently fast and
uniformly in $t \in \left[{0, T}\right]$. We recall that $v$ is the
solution of the initial value problem (\ref{equ-5}). Then we can effectively
determine a computable subsequence of the double sequence$\{v_n^j\}$
which will converge fast to $v$ uniformly in $t \in \left[{0,
T}\right]$.

Since $v$ is the limit of a fast convergent computable sequence, $v$
itself is computable. So $K_{\mathbb{R}}: ({\varphi, t})\mapsto
u\left(t\right)$ is $\left({\delta _{H^s},  \rho,
\delta_{H^s}}\right)$-computable for $t\geqslant 0$,  the reflection
$ R:S\left({\mathbb{R}}\right)\to S ({\mathbb{R}}), R(\psi ) (x):=
\psi (- x)$,  is $({\delta_{H^s}, \delta_{H^s}})$-computable. Define
${u}^\prime \left(t\right)\left(x\right):=u\left({-t}\right)\left({-
x}\right)$. Then ${u}^\prime _t+{u}^\prime{u}^\prime _x+{u}^{\prime{2}}{u}^\prime _x
+{u}^\prime _{xxx}=0$ and for $t\geqslant 0, u\left({-t}\right)=R
\circ{u}^\prime \left(t\right)=R \circ K_\mathbb{R}\left({{u}^\prime
\left(0\right), t}\right)=R \circ
K_\mathbb{R}\left({R\left(\varphi\right), t}\right)$,  i.e.,
$u\left(t\right)=R \circ K_\mathbb{R}\left({R\left(\varphi\right),
-t}\right)$ for $t\leqslant 0$. Therefore,  as the two computable
functions join at $0$, $K_\mathbb{R}$ is computable for
$t\in\mathbb{R}$. (see  \cite{dMS03})
\end{proof}

Thus, we prove the main result, and we can see that the machine searches for fast approximations to
$u(x; t)$, and computes the solutions of the Combined KdV equation with arbitrary precision. This approach
 can be extended to other nonlinear equations.

\end{document}